\documentclass[letterpaper,10pt,conference]{ieeeconf}
 
\IEEEoverridecommandlockouts
\usepackage{amsmath}
\usepackage{amssymb}
\usepackage{bm}
\usepackage{array}
\usepackage{subcaption} 

 \usepackage{stfloats}
\usepackage{graphicx}
\usepackage{booktabs}
\usepackage{cite}
\usepackage{comment}
\usepackage{algorithm}
\usepackage{xcolor}
\usepackage{algpseudocode}

\newtheorem{definition}{Definition}
\newtheorem{theorem}{Theorem}
\newtheorem{lemma}{Lemma}

\title{\LARGE \bf
Timely Persuasion with Opposing Senders: When Does Competition Help the Receiver?
}
 
\author{Ahmet Bugra Gundogan and Melih Bastopcu%
\thanks{A.B.~Gundogan and M.~Bastopcu are with the Department of Electrical and Electronics Engineering, Bilkent University, Ankara, T\"{u}rkiye.
        {\tt\small \{bugra.gundogan, bastopcu\}@bilkent.edu.tr}}%
\thanks{This work was supported in part by the TUBITAK 2232-B program (Project No:124C533).}}
\vspace{-0.3cm}
 
\begin{document}
 
\maketitle
\thispagestyle{empty}
\pagestyle{empty}
\begin{abstract}
Status updates let a receiver track a changing process, but a sender that controls when they are sent can steer the receiver's estimate without misreporting. We study competition between two such senders with opposing preferences, who report on a binary continuous-time Markov chain using state-dependent Poisson rates under individual budgets. The receiver follows the reports through a zero-order-hold estimator only if this yields at least the long-run weighted correct-estimation utility of a prior-based default estimate. We model the interaction as a Stackelberg game in which the sender favored by this default leads, the opposing sender follows, and the receiver then selects its estimator. We characterize the pure-strategy equilibria in closed form. When the follower can obtain positive utility against a silent leader, limited state-$0$ reporting by the leader merely frees part of the follower's budget for state-$1$ reports, so every optimal leader policy is either silence or full-budget state-$0$ reporting, depending on explicit budget thresholds. A single sender with the same total budget leaves the receiver at its default utility, whereas the full-budget intervention outcome makes the receiver's participation constraint slack and strictly improves its utility. At the intervention threshold, silence and full-budget intervention yield identical sender payoffs but different receiver utilities. Numerical results illustrate these regimes and compare equilibrium performance with a receiver-optimal reporting benchmark.
\end{abstract}
\section{Introduction}
\label{sec:introduction}

When a receiver tracks a changing process through status updates,
its estimate is only as current as the latest update. This has
made the age of information a central performance
measure~\cite{Yates12,Survey_AoI}, and keeping information fresh
is the natural goal when the sender wants the receiver to be
accurate. A sender that instead has a stake in the receiver's
estimate can use timing to persuade without misreporting. In other words, by
reporting favorable states promptly and unfavorable ones
less frequently, it lets the receiver's information go stale precisely
when the state turns unfavorable. Since the receiver can ignore
the updates altogether, the sender must still report unfavorable
states often enough to keep the receiver following them \cite{GundoganBastopcu2026Timeliness}. A second
sender with the opposite preference can supply these reports
itself, but doing so also frees the first sender to report
favorable states even more often. This raises two questions: when
does the opposing sender choose to intervene, and when does its
intervention improve the receiver's estimation performance?

We address these questions through a game in which two senders
continuously observe a binary continuous-time Markov chain (CTMC)
and send truthful reports of its state to a common receiver as shown in Fig.~\ref{fig:multisendersystem}. Each
sender's strategy is a pair of state-dependent Poisson reporting
rates, constrained by an individual budget on their sum.
Irrespective of the source state, sender~1 is rewarded whenever the
receiver's estimate is~$1$ and sender~2 whenever it is~$0$, so their
payoffs are constant-sum. The receiver maximizes its long-run
weighted correct-estimation utility by choosing between a
zero-order-hold (ZOH) estimator, which retains the most recently
reported state, and a default constant estimate chosen optimally under the stationary prior; throughout, we consider the regime in which this default estimate is~$0$. The receiver therefore follows the reports only if
ZOH estimation yields at least its default utility, a requirement we
call the participation constraint (PC). Since this default already
favors sender~2, we let it act as the leader and ask when it chooses
to intervene. The resulting Stackelberg game has three stages:
sender~2 commits to its reporting policy; sender~1, the follower,
observes it and commits to its own; and the receiver observes both
and selects its estimator. Both senders then report concurrently
under these fixed policies. The receiver does not use information
conveyed by the absence of reports, and our equilibrium and
receiver-benefit results apply within this estimator architecture.
\begin{figure}[t]
\centering
\includegraphics[width=0.9\columnwidth]{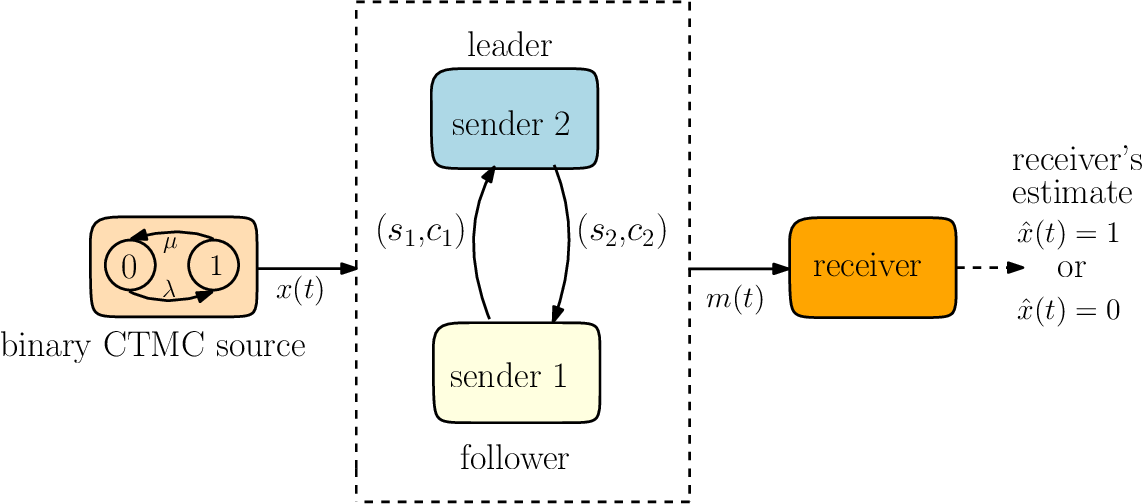}\vspace{-0.15cm}
\caption{System model with two competing senders and one receiver.}
\label{fig:multisendersystem}
\vspace{-0.8cm}
\end{figure}

Within the age-of-information literature, our model is closest to
remote estimation, where sampling policies are designed to improve
the receiver's estimate of Wiener processes~\cite{Sun2020}, binary
infection status~\cite{bastopcu2022using}, and finite-state CTMCs
under sampling rate constraints~\cite{Akar_CTMC}. In our problem,
the reporting rates are chosen by senders with opposing preferences
over this estimate, so estimation performance is an equilibrium
outcome rather than a design objective.

As in Bayesian persuasion~\cite{kamenica2011bayesian} and
information design~\cite{BergemannMorris2019InformationDesign}, our
senders commit to their policies in advance. Dynamic persuasion has
been studied with partially informed receivers~\cite{au2015dynamic},
Markovian states~\cite{renault2017optimal,ashkenazi2023markovtwo},
and stochastic state revelations~\cite{lehrer2025markov}.
Ely~\cite{ely2017beeps} studies dynamic persuasion through messages
about an evolving stochastic process, with a receiver that acts on
its beliefs. Related formulations address sequential disclosure for
change detection~\cite{farhadi2022dynamic}, continuous-time
persuasion through filtering~\cite{aid2025filtering}, and costly
information provision without commitment to future
actions~\cite{che2023dynamic}. In contrast to these models, which allow the sender to shape the receiver's information through signal or disclosure design, our senders cannot alter the content of information and control only the timing of truthful state reports. Our prior
work~\cite{GundoganBastopcu2026Timeliness} studies this form of
timely persuasion with a single sender, and we retain its source,
reporting, and receiver models. There, whenever persuasion is
feasible, the sender reports state~$0$ just often enough to make the
PC bind, so the receiver gains nothing from following the reports.

Asking whether competition can improve on this outcome connects our
work to multi-sender persuasion, where the effect of competition on
information revelation depends on the available signal
structures~\cite{gentzkow2017competition,gentzkow2017multisender}.
Au and Kawai~\cite{au2020competitive} show that increasing the
number of symmetric senders leads to more aggressive disclosure in
their model, and Li and Norman~\cite{LiNorman2021Sequential}
characterize equilibrium outcomes under sequential persuasion. In
the control literature, Velicheti
\emph{et al.}~\cite{Velicheti2025_multi_sender} study simultaneous
sender commitments with Gaussian information and quadratic
objectives, including a dynamic extension, and establish a receiver
benefit from competition relative to cooperation among the senders.
Our senders instead commit sequentially to the rates of truthful
reports. Since $J_1+J_2=1$, maximizing the sum of sender payoffs does
not select a reporting policy. We therefore compare the receiver's
equilibrium utility with that under a single sender with the same
total budget.

We answer both questions posed at the outset. For the first, we
derive the follower's best response and characterize the
pure-strategy Stackelberg equilibria in closed form. When the
follower can obtain positive utility on its own, the leader's
state-$0$ reports first help the follower by freeing its budget for
state-$1$ reports and hurt it only once they alone exceed what
participation requires. The leader therefore chooses between silence
and full-budget state-$0$ reporting. We identify a threshold on the
follower's budget below which silence is optimal regardless of the
leader's budget, and a second threshold governing intervention above
it. For the second, whenever the full-budget intervention outcome is
optimal, the leader's state-$0$ reports make the PC slack and strictly
increase the receiver's utility. In contrast, the optimal
single-sender benchmark leaves the receiver at its default utility. At the intervention threshold, the
two optimal leader policies yield identical sender payoffs but
different receiver utilities. Numerical results illustrate these
regimes and show that an appropriate budget division can attain the
receiver-optimal reporting benchmark for the parameters considered.

\section{System Model and Problem Formulation}
\label{sec:model}

We consider a binary source observed by two senders that report to a
receiver, as shown in Fig.~\ref{fig:multisendersystem}. The source state
$x(t)\in\{0,1\}$ evolves as a continuous-time Markov chain (CTMC), with
transition rate $\lambda>0$ from state~$0$ to state~$1$ and rate $\mu>0$
from state~$1$ to state~$0$. Thus, the source has a unique stationary distribution, given by
\begin{align}\nonumber\\[-21pt]
    \pi_0=\frac{\mu}{\lambda+\mu},\qquad
    \pi_1=\frac{\lambda}{\lambda+\mu}.
    \label{eq:source_stationary}
   \\[-21pt] \nonumber
\end{align}
Both senders continuously observe the source. The receiver does not
observe it directly but knows the transition rates $\lambda$ and $\mu$.

The receiver wants to track the source's state as accurately as possible. To do so,  it forms an estimate $\hat x(t)\in\{0,1\}$ at time $t$ and obtains an instantaneous
utility as
\begin{align}\nonumber\\[-19pt]
    u_R(x,\hat x)
    =q\mathbf 1_{\{x=0,\hat x=0\}}
    +(1-q)\mathbf 1_{\{x=1,\hat x=1\}}, \label{eq:receiver_utility}\\[-19pt]\nonumber
\end{align}
where $q\in(0,1)$ weights correct estimates of the two states and
$\mathbf 1_{\{\cdot\}}$ is the indicator function. Constant estimates
$0$ and $1$ yield average utilities $q\pi_0$ and $(1-q)\pi_1$,
respectively. Throughout the paper, we assume
$q\pi_0>(1-q)\pi_1$, so the receiver's optimal constant estimate is~$0$.
Sender~1 prefers the receiver's estimate to be~$1$, whereas sender~2 prefers it to be~$0$, regardless of the source state:
\begin{equation}
    u_1(x,\hat x)=\mathbf 1_{\{\hat x=1\}},\qquad
    u_2(x,\hat x)=\mathbf 1_{\{\hat x=0\}}.
    \label{eq:sender_utilities}
\end{equation}

Let $s_j$ and $c_j$ denote the reporting rates of sender~$j\in\{1,2\}$ when $x(t)=1$ and $x(t)=0$, respectively. Sender~$j$ selects the stationary reporting policy $\beta_j=(s_j,c_j)$. Conditional on the source trajectory, each sender's reports form a Poisson process with these state-dependent rates, and the two reporting processes are independent. The rates depend only on the current source state, not on past reports or the receiver's estimate. Each report contains the current source state and arrives without delay or error. Following~\cite{GundoganBastopcu2026Timeliness}, we impose the conditional-intensity budget
\begin{equation}
    \beta_j\in\mathcal B_j
    :=\{(s_j,c_j)\in\mathbb R_{\geq0}^{2}:s_j+c_j\leq R_j\},
    \label{eq:reporting_budget}
\end{equation}
where $R_j\geq0$ is sender~$j$'s reporting capacity. With the constraints on the sum of the conditional intensities, the long-run average reporting rate of sender~$j$ is $\pi_1s_j+\pi_0c_j$. The source parameters,
budgets, and utility functions are common knowledge.

Following~\cite[Remark~1]{GundoganBastopcu2026Timeliness}, the receiver chooses between two estimation policies. Under $\sigma_s$, it uses the reports from both senders where each new report replaces the current estimate with the reported source state, which is then held until the next report. Under $\sigma_d$, it ignores all reports and always estimates~$0$. Let $t_k$ denote the $k$th report arrival from either sender. Then, under $\sigma_s$, the receiver's estimate  $\hat x(t)$ becomes
\begin{equation}
\hat x(t)=x(t_k),\qquad t_k\leq t<t_{k+1}.
\label{eq:zoh_estimator}
\end{equation}
Before the first report, the receiver estimates~$0$.

We study a receiver with a fixed hold-last-value interface that can either be enabled or disabled before operation. Accordingly, the receiver selects one of the two admissible policies, $\sigma_s$ or $\sigma_d$, before reporting begins and uses it thereafter. The participation constraint introduced below compares these two modes which does not imply that the held estimate is optimal after every observation history.

We model the interaction as a three-stage Stackelberg game with
sender~2 as the leader and sender~1 as the follower. In the first
stage, sender~2 commits to a reporting policy
$\beta_2\in\mathcal B_2$. In the second stage, sender~1 observes
$\beta_2$ and commits to a reporting policy
$\beta_1\in\mathcal B_1$. In the third stage, the receiver observes
$(\beta_1,\beta_2)$ and selects an estimator
$\sigma\in\{\sigma_s,\sigma_d\}$. These choices remain fixed
throughout the subsequent reporting and estimation. Because the
preferred estimate of sender~2 coincides with the receiver's default
estimate~$0$, the leader effectively decides whether to intervene
against sender~1, which prefers estimate~$1$, anticipating the
subsequent choices of sender~1 and the receiver. The payoff of
player $k\in\{1,2,R\}$ is its long-run average utility
\begin{equation}
    J_k(\beta_1,\beta_2,\sigma)
    =\lim_{T\to\infty}\frac{1}{T}
    \mathbb E\!\left[\int_0^T u_k(x(t),\hat x(t))\,dt\right].
    \label{eq:average_utility}
\end{equation}
Since $u_1(x,\hat x)+u_2(x,\hat x)=1$ for all $(x,\hat x)$, the
senders' payoffs satisfy $J_1+J_2=1$ for every strategy profile
$(\beta_1,\beta_2,\sigma)$. The game between the senders is
therefore constant-sum and strategically equivalent to a zero-sum
game in which sender~2 minimizes the payoff that sender~1 attains
through its best response.

\subsection{Average Utilities and the Participation Constraint}
\label{subsec:utilities}
Under $\sigma_s$, the estimate is updated at the report arrivals of
both senders. Since the two reporting processes are conditionally independent given
the source trajectory, their superposition is conditionally Poisson
with rate $s=s_1+s_2$ in state~$1$ and $c=c_1+c_2$ in state~$0$. Because every report reveals the current source state, the ZOH
estimate depends only on the arrival times of this aggregate process
and not on which sender generated each report. Hence,
$(x(t),\hat x(t))$ is a CTMC on $\{0,1\}^2$ with the same transition
structure as the single-sender model in
\cite[Sec.~II-A]{GundoganBastopcu2026Timeliness}, with $(s,c)$ in
place of that sender's reporting rates. Consequently, the payoffs
under $\sigma_s$ depend on $(\beta_1,\beta_2)$ only through $(s,c)$,
and, with a slight abuse of notation, we write
$J_k(s,c):=J_k(\beta_1,\beta_2,\sigma_s)$ for $k\in\{1,2,R\}$. For
$(s,c)\neq(0,0)$, this chain has a unique stationary distribution,
and evaluating the payoffs under it as in
\cite[Sec.~II-A]{GundoganBastopcu2026Timeliness} gives
\begin{equation}
    J_R(s,c)
    =\frac{q\mu c(\mu+s)+(1-q)\lambda s(\lambda+c)}
    {(\lambda+\mu)(\mu c+\lambda s+cs)},
    \label{eq:receiver_average}
\end{equation}
\begin{equation}
    J_1(s,c)
    =\frac{\lambda s(c+\lambda+\mu)}
    {(\lambda+\mu)(\mu c+\lambda s+cs)},
    \label{eq:sender_average}
\end{equation}
and $J_2(s,c)=1-J_1(s,c)$. When $s=c=0$, no reports arrive and the
estimate remains at its initial value~$0$, so $J_R(0,0)=q\pi_0$,
$J_1(0,0)=0$, and $J_2(0,0)=1$. Under $\sigma_d$, the receiver's estimate is
identically~$0$ for all $t$, so $J_R=q\pi_0$, $J_1=0$, and $J_2=1$ for every
$(\beta_1,\beta_2)$.

Since the receiver obtains $q\pi_0$ under $\sigma_d$ and $J_R(s,c)$ under $\sigma_s$, following the reports is optimal within the receiver's admissible policy set if and only if the participation constraint (PC)
\begin{equation}
    J_R(s,c)\geq q\pi_0
\label{eq:participation}
\end{equation}
holds. We define the term $c_{\min}:=q\mu/(1-q)-\lambda$, which is positive
because $q\pi_0>(1-q)\pi_1$. Subtracting $q\pi_0$ from
\eqref{eq:receiver_average} and using
$(1-q)(\lambda+c)-q\mu=(1-q)(c-c_{\min})$ gives, for
$(s,c)\neq(0,0)$,
\begin{equation}
    J_R(s,c)-q\pi_0
    =\frac{(1-q)\lambda s(c-c_{\min})}
    {(\lambda+\mu)(\mu c+\lambda s+cs)}.
    \label{eq:receiver_gain_identity}
\end{equation}
Hence, for $s>0$, the PC is equivalent to
\begin{equation}
    c_1+c_2\geq c_{\min},
\label{eq:participation_threshold}
\end{equation}
in agreement with \cite[eq.~(10)]{GundoganBastopcu2026Timeliness},
whereas for $s=0$ it holds with equality. When $s>0$, participation therefore requires a minimum aggregate
intensity $c_{\min}$ of state-$0$ reports. In the
single-sender model, this requirement must be met by that sender
alone. Here, it is met jointly by the state-$0$ reports of both
senders.

When the PC holds with equality, the receiver is indifferent between
$\sigma_s$ and $\sigma_d$. We adopt the tie-breaking rule under which
the receiver selects $\sigma_s$ at indifference unless $s=c=0$, in
which case it selects $\sigma_d$. The receiver's response is thus
\begin{equation}
    \!\!\sigma^*(\beta_1,\beta_2)\!=\!
    \begin{cases}
        \sigma_s, &\! J_R(s,c)\!\geq\! q\pi_0\text{ and }(s,c)\!\neq\!(0,0),\\
        \sigma_d, & \text{otherwise},
    \end{cases}\!\!
    \label{eq:receiver_response}
\end{equation}
where $(s,c)$ are the aggregate rates of the senders defined previously.
This rule favors sender~1 and ensures that its maximum is attained
when the PC binds. Under the opposite rule, a follower facing a silent leader with
$R_1>c_{\min}$ would obtain a positive payoff only if
$c_1>c_{\min}$, so its payoff would approach its supremum as $c_1$
approaches $c_{\min}$ from above, without attaining it.

\subsection{Stackelberg Game Formulation}
\label{subsec:game_formulation}

Since sender~1 observes $\beta_2$ and the receiver observes
$(\beta_1,\beta_2)$, their strategies are maps
$\beta_1^*:\mathcal B_2\to\mathcal B_1$ and
$\sigma^*:\mathcal B_1\times\mathcal B_2\to\Sigma$, where
$\Sigma:=\{\sigma_s,\sigma_d\}$.

\begin{definition}
\label{def:equilibrium}
The tuple $\!(\beta_1^*,\beta_2^*,\sigma^*)\!$ is a pure-strategy
Stackelberg equilibrium if, for all $\!\beta_1\!\!\in\!\mathcal B_1$,
$\!\beta_2\!\!\in\!\mathcal B_2$, and $\!\sigma\!\!\in\!\Sigma$,
\begin{align}
    J_R\bigl(\beta_1,\!\beta_2,\sigma^*(\beta_1,\beta_2)\bigr)
    &\!\geq J_R(\beta_1,\beta_2,\sigma),
    \label{eq:eq_receiver}\\
    \!\!\!J_1\!\bigl(\!\beta_1^*(\beta_2),\!\beta_2,\!
    \sigma^*(\beta_1^*(\beta_2),\beta_2)\!\bigr)
    &\!\geq\! J_1\bigl(\!\beta_1,\beta_2,\sigma^*(\beta_1,\beta_2)\!\bigr),\!\!\!
    \label{eq:eq_follower}\\
    J_2\bigl(\beta_1^*(\beta_2^*),\beta_2^*,
    \sigma^*(\beta_1^*(\beta_2^*),&\beta_2^*)\bigr)
    \notag\\ &\hspace{-1.5cm}\geq J_2\bigl(\beta_1^*(\beta_2),\beta_2,
    \sigma^*(\beta_1^*(\beta_2),\beta_2)\bigr),
    \label{eq:eq_leader}
\end{align}
where ties in~\eqref{eq:eq_receiver} are broken as in
Section~\ref{subsec:utilities}.
\end{definition}

Under this tie-breaking rule, \eqref{eq:eq_receiver} yields the
response~\eqref{eq:receiver_response}. Since $J_1+J_2=1$, all
maximizers in~\eqref{eq:eq_follower} give the leader the same payoff,
so no tie-breaking rule is needed for the follower, and
\eqref{eq:eq_leader} states that $\beta_2^*$ minimizes the follower's
optimal payoff. With a slight abuse of notation, we also write
$(\beta_1^*,\beta_2^*,\sigma^*)$ for the equilibrium outcome, with
$\beta_1^*=\beta_1^*(\beta_2^*)$ and
$\sigma^*=\sigma^*(\beta_1^*,\beta_2^*)$.

\section{Equilibrium Characterization}
\label{sec:equilibrium}

We characterize the equilibria by backward induction. Taking the
receiver's response~\eqref{eq:receiver_response} as given, we first
derive sender~1's best response~\eqref{eq:eq_follower} to a fixed
leader policy and then determine sender~2's optimal
commitment~\eqref{eq:eq_leader}.

\subsection{Sender 1's Best Response}
\label{subsec:follower}
Fix $\beta_2=(s_2,c_2)\in\mathcal B_2$ and let
$\delta:=(c_{\min}-c_2)^+$, where $(z)^+:=\max\{z,0\}$. By
\eqref{eq:participation_threshold}, whenever $s_1+s_2>0$, the PC
holds if and only if $c_1\geq\delta$. Thus, $\delta$ is the minimum
state-$0$ intensity that sender~1 must supply for the receiver to
follow the reports.

\begin{lemma}
\label{lem:follower_br}
For every $\beta_2=(s_2,c_2)\in\mathcal B_2$, the policy
\begin{equation}
    \beta_1^*(\beta_2)=
    \begin{cases}
        (0,0), & R_1<\delta,\\
        (R_1-\delta,\delta), & R_1\geq\delta,
    \end{cases}
    \label{eq:follower_br_explicit}
\end{equation}
with $\delta=(c_{\min}-c_2)^+$, satisfies~\eqref{eq:eq_follower}.
If the follower's optimal payoff is positive, $\beta_1^*(\beta_2)$ is
the unique maximizer. If the optimal payoff is zero, every $\beta_1\in\mathcal B_1$ is a maximizer; we retain the selection in~\eqref{eq:follower_br_explicit} in the subsequent calculations.

\end{lemma}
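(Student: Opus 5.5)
Fix $\beta_2=(s_2,c_2)$ and write $s=s_1+s_2$, $c=c_1+c_2$. By \eqref{eq:receiver_response}, the follower's payoff is $J_1(s,c)$ whenever the PC holds and $(s,c)\neq(0,0)$, and it is $0$ otherwise. Since $J_1\ge 0$ always, the follower only needs to maximize $J_1(s,c)$ over those $\beta_1\in\mathcal B_1$ for which the receiver follows. The plan is to (i) establish monotonicity of $J_1$ in $s$ and $c$, (ii) use it to push the budget constraint and the PC to bind, and (iii) handle the degenerate cases.

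\textbf{Step (i): monotonicity of $J_1$.} Differentiate \eqref{eq:sender_average} with $D=\mu c+\lambda s+cs$.
\begin{itemize}
\item For $s$, the quotient $s/D$ has derivative $\mu c/D^2\ge 0$, so $J_1$ is nondecreasing in $s$, strictly when $c>0$. When $c=0$ we get $J_1=(\lambda+\mu)/(\lambda+\mu)\cdot\lambda s/(\lambda s)=1$, which is constant in $s$.
\item For $c$, the derivative of $(c+\lambda+\mu)/(\mu c+\lambda s+cs)$ has numerator $D-(c+\lambda+\mu)(\mu+s)=-\mu(\lambda+\mu)-\mu s<0$. Hence $J_1$ is strictly decreasing in $c$ for $s>0$.
\end{itemize}
So state-$0$ reports by the follower only hurt it, except insofar as they are needed for the PC.

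\textbf{Step (ii): the main case.} Suppose the follower obtains a positive payoff. Then $s>0$, and the PC is equivalent to $c_1\ge\delta$ by \eqref{eq:participation_threshold}.
\begin{itemize}
\item If $R_1<\delta$, this is infeasible. Hence every $\beta_1$ yields payoff $0$: either $s=0$, or the PC fails and $\sigma_d$ is chosen. In that case $(0,0)$ is optimal and every $\beta_1$ is a maximizer.
\item If $R_1\ge\delta$, decreasing $c_1$ toward $\delta$ strictly increases $J_1$ when $s>0$, so $c_1=\delta$. Given $c_1$, raising $s_1$ weakly increases $J_1$, so $s_1=R_1-\delta$.
\end{itemize}

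\textbf{Step (iii): uniqueness and degenerate cases.} This is the main obstacle, because strictness in $s$ fails when $c=0$.
\begin{itemize}
\item If $c=c_1+c_2=0$, then $c_2=0$ and $\delta=c_{\min}>0$. The PC with $s>0$ would require $0\ge c_{\min}$, which fails, so $c>0$ at any positive-payoff point.
\item When $c>0$, $J_1$ is strictly increasing in $s$, so any positive-payoff maximizer has $s_1=R_1-c_1$ and $c_1=\delta$. If $\delta=0$ and $c_2=0$ we would need $c_{\min}\le 0$, a contradiction, so $c_2>0$ whenever $\delta=0$ and strictness still holds. This gives uniqueness.
\item If instead the optimal payoff is $0$, every feasible $\beta_1$ attains it. The case $R_1\ge\delta$ with $R_1-\delta+s_2=0$ also fits here, and then the selection yields payoff $0$ as well.
\end{itemize}
Finally, I would verify that $\beta_1^*$ is feasible in $\mathcal B_1$ and that the PC holds with the tie-break in \eqref{eq:receiver_response}. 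This is what guarantees that the supremum is attained.
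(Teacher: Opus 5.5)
Your proposal is correct and takes essentially the same route as the paper. You use the same case split on $R_1$ versus $\delta$ and the same two monotonicity moves: lower $c_1$ to $\delta$, then raise $s_1$ to $R_1-\delta$, which is the chain \eqref{eq:follower_dominance}. You also use the same fact that $c=\delta+c_2=\max\{c_{\min},c_2\}>0$ at any positive-payoff point, so strictness in $s$ gives uniqueness. The only difference is that you derive the monotonicity of $J_1$ directly from \eqref{eq:sender_average} instead of citing \cite[Lemma~1]{GundoganBastopcu2026Timeliness}, which makes your version self-contained.
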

\begin{proof}
For a given feasible leader policy $\beta_2=(s_2,c_2)$,
\eqref{eq:participation_threshold} implies that, whenever
$s_1+s_2>0$, the PC holds if and only if $c_1\geq\delta$. Thus,
$\delta$ is the minimum state-$0$ reporting intensity that sender~1
must supply for the receiver to follow the reports.

Suppose $R_1<\delta$. Then every $\beta_1\in\mathcal B_1$ satisfies
$c_1\leq R_1<\delta$. If $s_1+s_2>0$, the PC fails, so the receiver
selects $\sigma_d$ and sender~1 obtains a zero payoff. If
$s_1+s_2=0$, the estimate remains~$0$, and sender~1 again obtains a
zero payoff. Hence, every $\beta_1\in\mathcal B_1$, including
$(0,0)$, satisfies~\eqref{eq:eq_follower}.

Suppose $R_1\geq\delta$. Consider any $\beta_1\in\mathcal B_1$ with
$s_1+s_2>0$ that satisfies the PC. By the PC and the budget,
$c_1\geq\delta$ and $s_1\leq R_1-c_1\leq R_1-\delta$. Since
$s_1+s_2>0$ and $\delta+c_2=\max\{c_{\min},c_2\}>0$, \cite[Lemma 1]{GundoganBastopcu2026Timeliness} gives
\begin{equation}
\begin{aligned}
    J_1(s_1+s_2,c_1+c_2)
    &\leq J_1(s_1+s_2,\delta+c_2)\\
    &\leq J_1(R_1-\delta+s_2,\delta+c_2),
\end{aligned}
\label{eq:follower_dominance}
\end{equation}
where the first step lowers the state-$0$ intensity to its minimum
admissible value and the second raises the state-$1$ intensity to
its maximum feasible value. The policy $(R_1-\delta,\delta)$ is
feasible because $R_1\geq\delta$, satisfies the PC since
$\delta+c_2\geq c_{\min}$, and attains the right-hand side
of~\eqref{eq:follower_dominance}. Since policies that violate the PC
or have $s_1+s_2=0$ yield a zero payoff, $(R_1-\delta,\delta)$
satisfies~\eqref{eq:eq_follower}.

If the follower's optimal payoff is positive, every maximizer has
$s_1+s_2>0$ and satisfies the PC, so it obeys
\eqref{eq:follower_dominance}. Because the monotonicity
in \cite[Lemma 1]{GundoganBastopcu2026Timeliness} is strict, both inequalities
in~\eqref{eq:follower_dominance} must then hold with equality, which
gives $c_1=\delta$ and $s_1=R_1-\delta$ and hence the maximizer is
unique. If the optimal payoff is zero, every $\beta_1\in\mathcal B_1$
is a maximizer, since all payoffs are nonnegative.
\end{proof}

Lemma~\ref{lem:follower_br} shows how the leader's policy shapes the
follower's allocation. Since $\delta=(c_{\min}-c_2)^+$ decreases in
$c_2$ until it vanishes at $c_2=c_{\min}$, state-$0$ reports by
sender~2 reduce the intensity that sender~1 must devote to
participation and leave more of its budget for state-$1$ reports. At the boundary $R_1=\delta$, the selection in~\eqref{eq:follower_br_explicit} is $(0,\delta)$, so the follower's own state-$1$ intensity is zero. If $s_2>0$, this selection is the unique best response and yields a positive payoff, because the aggregate state-$0$ intensity $\delta+c_2\geq c_{\min}$ satisfies the PC and makes the leader's state-$1$ reports acceptable to the receiver.

\subsection{Sender 2's Optimal Policy}
\label{subsec:leader}

By~\eqref{eq:eq_leader} and $J_1+J_2=1$, sender~2 minimizes the
follower's optimal payoff. For $R_1>c_{\min}$ and $c_2\geq0$, let
$V(c_2)$ denote the follower's payoff when the leader uses
$\beta_2=(0,c_2)$. The next lemma shows that the leader need only
compare silence with full-budget state-$0$ reporting.

\begin{lemma}
\label{lem:leader_reduction}
There exists $\beta_2^*=(s_2^*,c_2^*)$ satisfying~\eqref{eq:eq_leader}
with $s_2^*=0$. If $R_1\leq c_{\min}$, the silent policy $(0,0)$
satisfies~\eqref{eq:eq_leader}. If $R_1>c_{\min}$, every $\beta_2^*$
satisfying~\eqref{eq:eq_leader} has $s_2^*=0$ and
$c_2^*\in\{0,R_2\}$.
\end{lemma}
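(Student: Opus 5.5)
The plan is to work with the follower's optimal payoff as a function of the leader's policy. By Lemma~\ref{lem:follower_br}, for $\beta_2=(s_2,c_2)$ this payoff is
\begin{equation*}
W(s_2,c_2)=\begin{cases}0, & R_1<\delta,\\ J_1(R_1-\delta+s_2,\ \delta+c_2), & R_1\geq\delta,\end{cases}
\end{equation*}
with $\delta=(c_{\min}-c_2)^+$. The convention $J_1(0,\cdot)=0$ covers the degenerate aggregate $s=0$. By \eqref{eq:eq_leader} and $J_1+J_2=1$, the leader's optimal policies are exactly the minimizers of $W$ over $\mathcal B_2$. I will use the strict monotonicity of \cite[Lemma 1]{GundoganBastopcu2026Timeliness}: $J_1$ is strictly increasing in $s$ and strictly decreasing in $c$ whenever $s>0$ and $c>0$. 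The aggregate state-$0$ intensity here is $\delta+c_2=\max\{c_{\min},c_2\}>0$, so the condition on $c$ always holds.

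\emph{Step 1 (removing $s_2$).} The parameter $\delta$ depends only on $c_2$. Moreover, $(0,c_2)\in\mathcal B_2$ whenever $(s_2,c_2)\in\mathcal B_2$. Hence $W(0,c_2)\leq W(s_2,c_2)$, because $J_1$ is nondecreasing in $s$ and the case $R_1<\delta$ gives $0$ for both policies. So for every optimal policy, zeroing out its $s_2$ component gives another optimal policy, and some optimal $\beta_2^*$ has $s_2^*=0$. Existence itself follows from the case analysis below, where the minimum is attained at explicit points.

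\emph{Step 2 (case $R_1\leq c_{\min}$).} Under silence we have $\delta=c_{\min}\geq R_1$. If $R_1<c_{\min}$, then $W=0$. If $R_1=c_{\min}$, the selection is $(0,c_{\min})$, so the aggregate state-$1$ rate is $0$ and $W=0$. Since $W\geq 0$ everywhere, silence is optimal.

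\emph{Step 3 (case $R_1>c_{\min}$).} Here $\delta\leq c_{\min}<R_1$ for every $c_2$, so $R_1-\delta>0$ and $J_1$ is strictly increasing in $s$ at every relevant point. Step 1 therefore becomes strict: any policy with $s_2>0$ is strictly worse for the leader, so every optimal policy has $s_2^*=0$. It remains to minimize $V(c_2)$ over $[0,R_2]$, which splits into two pieces.
\begin{itemize}
\item For $c_2\leq c_{\min}$, $V(c_2)=J_1(R_1-c_{\min}+c_2,\ c_{\min})$. This is strictly increasing in $c_2$, since the state-$0$ intensity is fixed and the state-$1$ intensity grows.
\item For $c_2\geq c_{\min}$, $V(c_2)=J_1(R_1,c_2)$. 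This is strictly decreasing in $c_2$.
\end{itemize}
The two pieces agree at $c_2=c_{\min}$. Hence $V$ is strictly increasing and then strictly decreasing on $[0,R_2]$; if $R_2\leq c_{\min}$, only the increasing part appears. Such a function has no minimizer in the open interval $(0,R_2)$. Therefore every minimizer lies in $\{0,R_2\}$, and the minimum is attained as the smaller of $V(0)$ and $V(R_2)$.

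The main obstacle is keeping the strictness claims honest. Step 3 needs strict monotonicity so that interior $c_2$ and positive $s_2$ are excluded, not merely weakly dominated. This requires checking that every relevant aggregate point has $s>0$ and $c>0$, which is exactly where the assumption $R_1>c_{\min}$ enters. It also requires checking that the kink at $c_2=c_{\min}$ cannot itself be a minimizer. That follows because $V(c_{\min})$ strictly exceeds $V(0)$ whenever $c_{\min}\in(0,R_2)$, and $c_{\min}>0$ holds by assumption.
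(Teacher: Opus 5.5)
Your proposal is correct and follows essentially the same route as the paper: you reduce to $s_2=0$ using monotonicity of $J_1$ in the aggregate state-$1$ rate (strict when $R_1>c_{\min}$), note that silence leaves the follower with zero payoff when $R_1\le c_{\min}$, and show that $V$ is strictly increasing on $[0,c_{\min}]$ and strictly decreasing on $[c_{\min},\infty)$, so its minimizers lie in $\{0,R_2\}$. You treat the boundary $R_1=c_{\min}$ (selection $(0,c_{\min})$ with aggregate $s=0$) and the existence of a minimizer more explicitly than the paper, but the argument is the same.
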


\begin{proof}
Fix $\beta_2=(s_2,c_2)\in\mathcal B_2$ and let
$\delta=(c_{\min}-c_2)^+$. If $R_1<\delta$, the follower's optimal
payoff is zero for every $s_2$ by Lemma~\ref{lem:follower_br}.
Otherwise, the best response~\eqref{eq:follower_br_explicit} induces
the aggregate rates
\begin{equation}
    s=R_1+s_2-\delta,\qquad
    c=\max\{c_{\min},c_2\},
    \label{eq:leader_induced_rates}
\end{equation}
and, since $c>0$, the follower's optimal payoff $J_1(s,c)$ is
strictly increasing in $s_2$ by \cite[Lemma 1]{GundoganBastopcu2026Timeliness}. In both
cases, replacing $(s_2,c_2)$ with $(0,c_2)$ preserves feasibility
and does not increase the follower's optimal payoff, so it suffices
to consider $s_2=0$. If $R_1>c_{\min}$, then $R_1>\delta$ for every
$c_2$, and this replacement strictly decreases the follower's optimal
payoff whenever $s_2>0$, and thus every $\beta_2^*$
satisfying~\eqref{eq:eq_leader} has $s_2^*=0$.

If $R_1\leq c_{\min}$, the silent policy leaves the follower with a
zero optimal payoff by Lemma~\ref{lem:follower_br}, so it attains
$J_2=1$, which is the largest possible leader payoff. Now suppose
$R_1>c_{\min}$ and $s_2=0$. By~\eqref{eq:leader_induced_rates}, we have
\begin{equation}
    V(c_2)=
    \begin{cases}
        J_1(R_1\!-\!c_{\min}\!+\!c_2,c_{\min}),
            & 0\leq c_2\leq c_{\min},\\
        J_1(R_1,c_2), & c_2\geq c_{\min}.
    \end{cases}
    \label{eq:leader_value}
\end{equation}
As $c_2$ increases, only the state-$1$ intensity grows in the first
branch and only the state-$0$ intensity grows in the second. Hence,
by \cite[Lemma 1]{GundoganBastopcu2026Timeliness}, we obtain
\begin{equation}
\begin{aligned}
    V'(c_2)&>0, &&0<c_2<c_{\min},\\
    V'(c_2)&<0, &&c_2>c_{\min}.
\end{aligned}
    \label{eq:leader_value_derivative}
\end{equation}
Since the two branches agree at $c_2=c_{\min}$, $V(c_2)$ is continuous,
strictly increasing on $[0,c_{\min}]$, and strictly decreasing on
$[c_{\min},\infty)$. Its minimum over the feasible set $[0,R_2]$
therefore exists and is attained only at $c_2\in\{0,R_2\}$. Together
with the reduction to $s_2=0$, this proves all three claims.
\end{proof}

The proof of Lemma~\ref{lem:leader_reduction} identifies two
opposing effects of the leader's state-$0$ reports. For
$c_2<c_{\min}$, additional reports free part of the follower's budget
for state-$1$ reports while keeping the aggregate state-$0$ intensity
at $c_{\min}$, which benefits the follower. For $c_2\geq c_{\min}$,
the follower already devotes its entire budget to state-$1$ reports,
so additional state-$0$ reports benefit the leader. Hence, the leader strictly prefers full-budget reporting to silence
if and only if $V(R_2)<V(0)$.

Two budget thresholds, defined in Theorem~\ref{thm:equilibrium},
determine the equilibrium. The first, $R_1^{\mathrm{crit}}$, is the
follower budget above which intervention can be strictly preferred:
if $R_1\leq R_1^{\mathrm{crit}}$, silence is optimal for every $R_2$.
For $R_1>R_1^{\mathrm{crit}}$, the second, $R_2^{\mathrm{crit}}$, is
the leader budget above which intervention is strictly preferred,
with indifference at $R_2=R_2^{\mathrm{crit}}$.

\begin{theorem}
\label{thm:equilibrium}
Let $\phi(R_1):=R_1(R_1+\lambda+\mu-c_{\min})$ and
$\psi:=c_{\min}(c_{\min}+\lambda+\mu)$. Let $R_1^{\mathrm{crit}}$ be
the unique solution of $\phi(R_1)=\psi$ in $(c_{\min},\infty)$  given by
\begin{equation}
\begin{split}
    R_1^{\mathrm{crit}}
    =\frac{1}{2}&\Bigl(c_{\min}-\lambda-\mu\\
    &+\sqrt{(\lambda+\mu)^2+2c_{\min}(\lambda+\mu)+5c_{\min}^2}\Bigr),
\end{split}
\label{eq:R1_crit}
\end{equation}
and, for $R_1>c_{\min}$, let
\begin{equation}
    R_2^{\mathrm{crit}}:=
    \begin{cases}
        \dfrac{c_{\min}\,\phi(R_1)}{\phi(R_1)-\psi},
            & R_1>R_1^{\mathrm{crit}},\\[1.5ex]
        +\infty, & R_1\leq R_1^{\mathrm{crit}},
    \end{cases}
\label{eq:R2_crit}
\end{equation}
so that $R_2^{\mathrm{crit}}>c_{\min}$. The pure-strategy equilibrium
outcomes are as follows.
\begin{enumerate}
    \item If $R_1\leq c_{\min}$, the equilibrium outcomes are the
    triples $(\beta_1,\beta_2,\sigma^*(\beta_1,\beta_2))$ with
    $\beta_1\in\mathcal B_1$ and $\beta_2=(s_2,c_2)\in\mathcal B_2$
    such that either $R_1<\delta$, or $R_1=\delta$ and $s_2=0$,
    where $\delta:=(c_{\min}-c_2)^+$. All of them yield $J_1=0$ and
    $J_2=1$; in particular, $((0,0),(0,0),\sigma_d)$ is an
    equilibrium outcome.
    \item If $R_1>c_{\min}$, we define the silence and intervention
    outcomes as
    \begin{align*}
        \mathcal O_{\mathrm s}
        &:=\bigl((R_1-c_{\min},c_{\min}),(0,0),\sigma_s\bigr),\\
        \mathcal O_{\mathrm i}
        &:=\bigl((R_1,0),(0,R_2),\sigma_s\bigr).
    \end{align*}
    Then $\mathcal O_{\mathrm s}$ is the unique equilibrium outcome
    if $R_2<R_2^{\mathrm{crit}}$, $\mathcal O_{\mathrm i}$ is the
    unique equilibrium outcome if $R_2>R_2^{\mathrm{crit}}$, and
    these two are the only equilibrium outcomes if
    $R_2=R_2^{\mathrm{crit}}$.
\end{enumerate}
\end{theorem}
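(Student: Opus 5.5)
The proof builds on Lemma~\ref{lem:leader_reduction}: when $R_1>c_{\min}$, the leader only compares $V(0)$ with $V(R_2)$, and when $R_1\le c_{\min}$ silence is already optimal. The remaining work has three parts: compute $V(0)$ and $V(R_2)$ in closed form, turn the comparison into the thresholds $R_1^{\mathrm{crit}}$ and $R_2^{\mathrm{crit}}$, and then enumerate all equilibrium outcomes, not just one.

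\emph{Case $R_1\le c_{\min}$.} Any leader policy gives $J_2\le 1$, and silence attains $J_2=1$. So the equilibrium leader policies are exactly those giving the follower optimal payoff $0$. By Lemma~\ref{lem:follower_br}, this happens when $R_1<\delta$, where every $\beta_1$ is a best response. It also happens when $R_1=\delta$ and $s_2=0$: the follower must then put its whole budget on $c_1=\delta$, so $s=0$ and the payoff is zero. If $R_1=\delta$ and $s_2>0$, the remark after Lemma~\ref{lem:follower_br} gives a positive follower payoff, so such policies are excluded. Finally, $R_1>\delta$ can only occur when $c_2>c_{\min}-R_1\ge 0$. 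In that case the follower obtains $J_1(R_1-\delta+s_2,\cdot)>0$, since $R_1-\delta>0$, so these policies are excluded too. The receiver's action is $\sigma^*$ throughout, which yields the stated set of outcomes.

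\emph{Case $R_1>c_{\min}$.} Using \eqref{eq:leader_value} and \eqref{eq:sender_average},
\[
V(0)=J_1(R_1-c_{\min},c_{\min}),\qquad V(R_2)=\begin{cases}J_1(R_1-c_{\min}+R_2,c_{\min}),&R_2\le c_{\min},\\ J_1(R_1,R_2),&R_2\ge c_{\min}.\end{cases}
\]
If $R_2\le c_{\min}$, strict monotonicity gives $V(R_2)>V(0)$ whenever $R_2>0$, so silence is strictly optimal. This is consistent with $R_2^{\mathrm{crit}}>c_{\min}$, which I check directly from $\phi>\psi$. For $R_2\ge c_{\min}$, write $J_1(s,c)=\lambda/(\lambda+\mu)\cdot s(c+\lambda+\mu)/(\mu c+\lambda s+cs)$. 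The inequality $V(R_2)<V(0)$ then becomes, after cross-multiplying positive denominators, an inequality that is linear in $R_2$. Setting $a=\lambda+\mu$, it reads
\[
R_1(R_2+a)\bigl(\mu c_{\min}+\lambda(R_1-c_{\min})+c_{\min}(R_1-c_{\min})\bigr)<(R_1-c_{\min})(c_{\min}+a)(\mu R_2+\lambda R_1+R_1R_2).
\]
The main obstacle is the algebra that reduces this to $R_2(\phi(R_1)-\psi)>c_{\min}\phi(R_1)$. I will expand both sides, collect the coefficient of $R_2$ and the constant term, and check that, after dividing out a common positive factor, these equal $\phi-\psi$ and $c_{\min}\phi$ (I expect the common factor to be $\mu$ or $\lambda+c_{\min}$-type). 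Once this reduction is established, the sign analysis follows. If $\phi\le\psi$, which by monotonicity of $\phi$ on $(c_{\min},\infty)$ means $R_1\le R_1^{\mathrm{crit}}$, intervention is never strictly better. Here $R_1^{\mathrm{crit}}$ is the positive root of $R_1^2+(a-c_{\min})R_1-\psi=0$; the discriminant is $(a-c_{\min})^2+4c_{\min}(c_{\min}+a)=a^2+2ac_{\min}+5c_{\min}^2$, which gives \eqref{eq:R1_crit}. Since $\phi(c_{\min})=ac_{\min}<\psi$, this root exceeds $c_{\min}$. If $\phi>\psi$, intervention is strictly better exactly when $R_2>R_2^{\mathrm{crit}}$, and the two options tie at equality.

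Uniqueness of outcomes uses Lemma~\ref{lem:leader_reduction}, which forces $\beta_2^*\in\{(0,0),(0,R_2)\}$. The strict comparison between $V(0)$ and $V(R_2)$ then selects one of these. The follower's payoff is positive in both cases, so by Lemma~\ref{lem:follower_br} its best response is unique: $(R_1-c_{\min},c_{\min})$ against silence, and $(R_1,0)$ against $(0,R_2)$ with $R_2\ge c_{\min}$, where $\delta=0$. The receiver plays $\sigma_s$ since the PC holds. This produces $\mathcal O_{\mathrm s}$ and $\mathcal O_{\mathrm i}$, with both outcomes present in the tie case $R_2=R_2^{\mathrm{crit}}$.
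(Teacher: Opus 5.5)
Your proposal is correct and follows the paper's proof essentially step for step. Part~1 identifies, via Lemma~\ref{lem:follower_br}, the leader policies that leave the follower a zero optimal payoff, and Part~2 uses the reduction of Lemma~\ref{lem:leader_reduction} to compare $V(0)$ with $V(R_2)$, with monotonicity for $R_2\le c_{\min}$ and a cross-multiplied comparison otherwise. The expansion you deferred works as you expect, with common factor $\mu$: right side minus left side of your inequality equals $\mu\bigl[(\phi(R_1)-\psi)R_2-c_{\min}\phi(R_1)\bigr]$, the identity the paper uses. Because this is an exact identity, it shows intervention is strictly worse (not merely ``never strictly better'') when $R_1\le R_1^{\mathrm{crit}}$, and it also gives the tie at $R_2=R_2^{\mathrm{crit}}$.
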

\begin{proof}
\emph{Part~1.} Suppose $R_1\leq c_{\min}$, and fix
$\beta_2=(s_2,c_2)\in\mathcal B_2$ with $\delta=(c_{\min}-c_2)^+$.
By Lemma~\ref{lem:follower_br}, the follower's optimal payoff is zero
if $R_1<\delta$. If $R_1\geq\delta$, it equals
$J_1(R_1-\delta+s_2,\max\{c_{\min},c_2\})$, which
by~\eqref{eq:sender_average} vanishes if and only if
$R_1-\delta+s_2=0$, that is, $R_1=\delta$ and $s_2=0$. Silence
satisfies this condition, so the minimum follower payoff is zero, and
$\beta_2$ satisfies~\eqref{eq:eq_leader} if and only if the stated
condition holds. For each such $\beta_2$, every
$\beta_1\in\mathcal B_1$ is a maximizer by
Lemma~\ref{lem:follower_br} and yields $J_1=0$ and $J_2=1$. With
$\beta_1=\beta_2=(0,0)$, the receiver selects $\sigma_d$
by~\eqref{eq:receiver_response}.

\emph{Part~2.} Suppose $R_1>c_{\min}$. By
Lemma~\ref{lem:leader_reduction}, the leader's candidates are
$(0,0)$ and $(0,R_2)$. If $R_2\leq c_{\min}$, then $V(R_2)>V(0)$
unless $R_2=0$, because $V$ is strictly increasing on
$[0,c_{\min}]$, silence is the unique optimal policy,
consistent with $R_2<R_2^{\mathrm{crit}}$. Now let $R_2>c_{\min}$.
By \eqref{eq:sender_average} and \eqref{eq:leader_value}, we have
\begin{align*}
    V(0)&=J_1(R_1-c_{\min},c_{\min})
    =\frac{\lambda N_0}{(\lambda+\mu)D_0},\\
    V(R_2)&=J_1(R_1,R_2)
    =\frac{\lambda N_2}{(\lambda+\mu)D_2},
\end{align*}
where $N_0:=(R_1-c_{\min})(c_{\min}+\lambda+\mu)$,
$D_0:=\mu c_{\min}+(\lambda+c_{\min})(R_1-c_{\min})$,
$N_2:=R_1(R_2+\lambda+\mu)$, and $D_2:=\mu R_2+(\lambda+R_2)R_1$ are
positive. Direct expansion gives
\begin{equation}
    N_0D_2-N_2D_0
    =\mu\bigl[(\phi(R_1)-\psi)R_2-c_{\min}\phi(R_1)\bigr],
    \label{eq:leader_comparison}
\end{equation}
so $V(R_2)$ is smaller than, equal to, or larger than $V(0)$
according as the bracket is positive, zero, or negative. The function
$\phi-\psi$ equals $-c_{\min}^2$ at $c_{\min}$ and has derivative
$2R_1+\lambda+\mu-c_{\min}>0$ for $R_1\geq c_{\min}$. Hence, it has
the unique root $R_1^{\mathrm{crit}}$ in $(c_{\min},\infty)$, given
by~\eqref{eq:R1_crit}, and is positive exactly when
$R_1>R_1^{\mathrm{crit}}$. If $R_1\leq R_1^{\mathrm{crit}}$, the
bracket is negative for every $R_2$, so silence is strictly optimal.
If $R_1>R_1^{\mathrm{crit}}$, the bracket has the sign of
$R_2-R_2^{\mathrm{crit}}$, where $R_2^{\mathrm{crit}}>c_{\min}$
because $\psi>0$. Thus, silence is the unique optimal policy if
$R_2<R_2^{\mathrm{crit}}$, full-budget state-0 reporting is the unique optimal
policy if $R_2>R_2^{\mathrm{crit}}$, and both are optimal if
$R_2=R_2^{\mathrm{crit}}$.

Finally, under silence, $\delta=c_{\min}<R_1$ and the follower's
optimal payoff is positive, so its unique best response is
$(R_1-c_{\min},c_{\min})$. Under optimal intervention,
$R_2\geq R_2^{\mathrm{crit}}>c_{\min}$, so $\delta=0$ and its unique
best response is $(R_1,0)$. In both cases, $s>0$ and $c\geq c_{\min}$,
so the receiver selects $\sigma_s$ by~\eqref{eq:receiver_response}.
This yields $\mathcal O_{\mathrm s}$ and $\mathcal O_{\mathrm i}$.
\end{proof}

Fig.~\ref{fig:sender2_utilities} illustrates the leader's choice
between silence and intervention. At $R_2=R_2^{\mathrm{crit}}$,
$\mathcal O_{\mathrm s}$ and $\mathcal O_{\mathrm i}$ yield identical
sender payoffs but, as Section~\ref{sec:receiver_benefit} shows,
different receiver utilities.

\begin{figure}[t]
  \centering
  \begin{subfigure}{0.48\columnwidth}
    \centering
    \includegraphics[width=\linewidth]{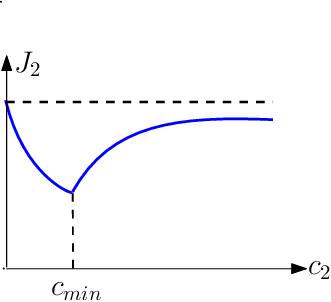}
    \caption{}
  \end{subfigure}
  \hfill
  \begin{subfigure}{0.48\columnwidth}
    \centering
    \includegraphics[width=\linewidth]{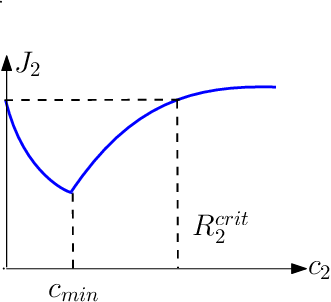}
    \caption{}
  \end{subfigure}
  \caption{Schematic illustration of sender~2's payoff $1-V(c_2)$ under $\beta_2=(0,c_2)$, given
the follower's best response, for
(a) $c_{\min}<R_1<R_1^{\mathrm{crit}}$ and
(b) $R_1>R_1^{\mathrm{crit}}$. The payoff decreases on
$[0,c_{\min}]$ and increases thereafter. The dashed line marks the
silence payoff $1-V(0)$. In (a), every $c_2>0$ yields less than
silence. In (b), the payoff exceeds that of silence exactly when
$c_2>R_2^{\mathrm{crit}}$, so full-budget reporting is optimal if
$R_2>R_2^{\mathrm{crit}}$.}
  \label{fig:sender2_utilities}
  \vspace{-0.5cm}
\end{figure}
\section{Receiver Benefit from Competition}
\label{sec:receiver_benefit}

As a benchmark, we consider a single sender with sender~1's preferences
and total budget $R:=R_1+R_2$, with the source, receiver, and budget
model unchanged. This benchmark coincides with our game for
$(R_1,R_2)=(R,0)$, in which the leader can only remain silent. By
\cite[Theorem~1]{GundoganBastopcu2026Timeliness}, or by applying
Theorem~\ref{thm:equilibrium} with $R_2=0$, the receiver's utility
under this benchmark is
\begin{equation}
    J_R^{\mathrm{single}}(R)
    =q\pi_0=\frac{q\mu}{\lambda+\mu},\qquad R\geq0.
    \label{eq:single_sender_benchmark}
\end{equation}
If the single sender instead had sender~2's preferences, it could
not gain from reporting, because the receiver's default estimate~$0$
is already the one it prefers. An optimal policy for such a sender is therefore to remain silent,
in which case the estimate remains~$0$ and the receiver again obtains
$q\pi_0$.
For the benchmark sender with sender~1's preferences, if
$R\leq c_{\min}$, the sender cannot obtain a positive payoff while
satisfying the PC, whereas for $R>c_{\min}$ it chooses
$(s,c)=(R-c_{\min},c_{\min})$, which makes the PC bind. In both
cases, the receiver obtains only its default utility, even though a
single sender controls the entire budget. More generally,
by~\eqref{eq:receiver_gain_identity}, ZOH strictly improves on the
default utility exactly when $s>0$ and $c>c_{\min}$, and the gain is
zero when $s=c=0$. Competition can therefore benefit the receiver
only through an equilibrium with $s>0$ and $c>c_{\min}$.

\begin{theorem}
\label{thm:receiver_benefit}
Let $J_R^*$ denote the receiver's utility at an equilibrium outcome
of Theorem~\ref{thm:equilibrium}. For all $R_1,R_2\geq0$,
\begin{equation}
    J_R^*\geq J_R^{\mathrm{single}}(R_1+R_2)=q\pi_0,
    \label{eq:receiver_weak_benefit}
\end{equation}
with equality at every equilibrium outcome other than
$\mathcal O_{\mathrm i}$. The outcome $\mathcal O_{\mathrm i}$ is an
equilibrium outcome if and only if $R_1>c_{\min}$ and
$R_2\geq R_2^{\mathrm{crit}}$, in which case
\begin{equation}
    J_R^*-q\pi_0
    =\frac{(1-q)\lambda R_1(R_2-c_{\min})}
    {(\lambda+\mu)(\mu R_2+\lambda R_1+R_1R_2)}
    >0.
    \label{eq:receiver_equilibrium_gain}
\end{equation}
In particular, at $R_2=R_2^{\mathrm{crit}}$, the two equilibrium
outcomes give the senders identical payoffs but give the receiver
$q\pi_0$ under $\!\mathcal O_{\mathrm s}\!$ and
$\!q\pi_0\!$ plus the strictly positive gain in
\eqref{eq:receiver_equilibrium_gain} under $\!\mathcal O_{\mathrm i}$.
\end{theorem}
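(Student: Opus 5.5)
The plan is to go through the equilibrium outcomes listed in Theorem~\ref{thm:equilibrium} one at a time and evaluate $J_R^*$ at each. The tool for this is the gain identity \eqref{eq:receiver_gain_identity}, together with the fact that under $\sigma_d$ the receiver gets exactly $q\pi_0$. The benchmark equality $J_R^{\mathrm{single}}(R_1+R_2)=q\pi_0$ is already recorded in \eqref{eq:single_sender_benchmark}. So the only things to establish are the sign and value of $J_R^*-q\pi_0$ at each outcome, and the iff condition for $\mathcal O_{\mathrm i}$.

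\emph{Part~1 outcomes ($R_1\le c_{\min}$).} I will use the receiver's best response \eqref{eq:receiver_response}, which by \eqref{eq:eq_receiver} gives $J_R^*=\max\{q\pi_0,J_R(s,c)\}\ge q\pi_0$ whenever $\sigma_s$ is chosen, and gives $q\pi_0$ under $\sigma_d$. For equality, note that at these outcomes $J_1=0$.
\begin{itemize}
\item If $\sigma^*=\sigma_s$ with $s>0$, then $J_1(s,c)>0$ by \eqref{eq:sender_average}, which is impossible. Hence $\sigma_s$ is chosen only with $s=0$ and $c>0$.
\item In that case \eqref{eq:receiver_gain_identity} gives zero gain.
\item If instead $\sigma^*=\sigma_d$, or $(s,c)=(0,0)$, the receiver obtains $q\pi_0$ directly.
\end{itemize}
So $J_R^*=q\pi_0$ at every Part~1 outcome.

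\emph{Part~2 outcomes ($R_1>c_{\min}$).} At $\mathcal O_{\mathrm s}$ the aggregate rates are $(s,c)=(R_1-c_{\min},c_{\min})$. Then $c-c_{\min}=0$, and \eqref{eq:receiver_gain_identity} gives zero gain. At $\mathcal O_{\mathrm i}$ the aggregate rates are $(s,c)=(R_1,R_2)$. Substituting into \eqref{eq:receiver_gain_identity} yields exactly \eqref{eq:receiver_equilibrium_gain}. Its positivity follows from $R_1>0$ and $R_2\ge R_2^{\mathrm{crit}}>c_{\min}$, where the last inequality is the one noted in Theorem~\ref{thm:equilibrium}.

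\emph{The iff claim for $\mathcal O_{\mathrm i}$.} By Theorem~\ref{thm:equilibrium}(2), $\mathcal O_{\mathrm i}$ is an equilibrium outcome when $R_1>c_{\min}$ and $R_2\ge R_2^{\mathrm{crit}}$, and not when $R_2<R_2^{\mathrm{crit}}$. If $R_1\le c_{\min}$, it is not among the Part~1 outcomes. The argument is that $\mathcal O_{\mathrm i}$ uses $\beta_2=(0,R_2)$ and $\sigma_s$ with $s=R_1$. If $R_1>0$, sender~1 then earns $J_1>0$, contradicting $J_1=0$. If $R_1=0$, the outcome is $((0,0),(0,R_2),\cdot)$, and its receiver utility is still $q\pi_0$ because $s=0$.

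This degenerate case is the one point needing care. The statement identifies $\mathcal O_{\mathrm i}$ by its defining triple, and that triple is only defined in Part~2. So I will simply read $\mathcal O_{\mathrm i}$ as the Part~2 object, and the equality claim for all other outcomes holds either way.

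The final sentence of the theorem then follows from Theorem~\ref{thm:equilibrium}. At $R_2=R_2^{\mathrm{crit}}$ both outcomes are equilibria, and since both are optimal for the leader, $V(0)=V(R_2)$, so the senders' payoffs coincide. Combining this with the two receiver computations above gives the stated difference in receiver utilities.

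I expect no real obstacle in this argument. The most delicate part is the bookkeeping of the Part~1 outcomes: ruling out $\sigma_s$ with $s>0$ via $J_1=0$, and handling the $s=0$ cases of \eqref{eq:receiver_gain_identity}.
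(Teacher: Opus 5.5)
Your proposal is correct and follows essentially the same route as the paper. The paper gets the weak inequality from the receiver's best response and uses $J_1=0$ to force $\sigma_d$ or $s=0$ at the Part~1 outcomes. It applies the gain identity at $\mathcal O_{\mathrm s}$ and $\mathcal O_{\mathrm i}$ and reads the iff condition and threshold comparison off Theorem~\ref{thm:equilibrium}. Your remark on the degenerate case $R_1=0$ is a valid extra: there $((0,0),(0,R_2),\sigma_s)$ formally matches the $\mathcal O_{\mathrm i}$ triple, and the paper avoids this only implicitly by defining $\mathcal O_{\mathrm i}$ for $R_1>c_{\min}$.
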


\begin{proof}
By~\eqref{eq:eq_receiver}, the receiver's equilibrium utility is at
least $q\pi_0$, the utility it obtains under $\sigma_d$. Together
with~\eqref{eq:single_sender_benchmark}, this
proves~\eqref{eq:receiver_weak_benefit}.

By Theorem~\ref{thm:equilibrium}, every equilibrium outcome other
than $\mathcal O_{\mathrm i}$ either has $R_1\leq c_{\min}$ or equals
$\mathcal O_{\mathrm s}$. In the first case, $J_1=0$ in equilibrium,
whereas~\eqref{eq:sender_average} gives $J_1(s,c)>0$ whenever the
receiver follows reports with $s>0$. Thus, the receiver either
selects $\sigma_d$ or follows reports with $s=0$, and in both cases
$J_R^*=q\pi_0$. Under $\mathcal O_{\mathrm s}$, the aggregate rates
are $(s,c)=(R_1-c_{\min},c_{\min})$,
so~\eqref{eq:receiver_gain_identity} gives $J_R^*=q\pi_0$.

By Theorem~\ref{thm:equilibrium}, $\mathcal O_{\mathrm i}$ is an
equilibrium outcome if and only if $R_1>c_{\min}$ and
$R_2\geq R_2^{\mathrm{crit}}$. Its aggregate rates are
$(s,c)=(R_1,R_2)$, so~\eqref{eq:receiver_gain_identity}
gives~\eqref{eq:receiver_equilibrium_gain}, which is positive because
$R_2\!\geq \!R_2^{\mathrm{crit}}\!>\!c_{\min}$. Finally, at
$R_2\!=\!R_2^{\mathrm{crit}}$, both outcomes are equilibria. They give the senders identical payoffs, since the leader is indifferent
between them, but give the receiver $q\pi_0$ under
$\mathcal O_{\mathrm s}$ and $q\pi_0$ plus the strictly positive gain
in~\eqref{eq:receiver_equilibrium_gain} under $\mathcal O_{\mathrm i}$.
\end{proof}

The gain arises because the opposing sender makes the PC slack. A
single sender supplies only the state-$0$ intensity $c_{\min}$ that
participation requires, so the PC binds. Under
$\mathcal O_{\mathrm i}$, sender~2 supplies $c_2=R_2>c_{\min}$, while
sender~1 already sets $c_1=0$ and therefore cannot offset these
reports by reducing its own state-$0$ intensity. Hence, competition
raises the receiver's utility even though the total budget is
unchanged.

For fixed $R_1>R_1^{\mathrm{crit}}$, the equilibrium receiver-utility
correspondence has distinct branches on either side of
$R_2^{\mathrm{crit}}$. For $R_2<R_2^{\mathrm{crit}}$, the unique
equilibrium yields $q\pi_0$, whereas for
$R_2>R_2^{\mathrm{crit}}$, the unique intervention equilibrium yields
the strictly larger utility in~\eqref{eq:receiver_equilibrium_gain}.
At $R_2=R_2^{\mathrm{crit}}$, both values are equilibrium utilities.
The sender payoffs, in contrast, are continuous across the threshold
because $\mathcal O_{\mathrm s}$ and $\mathcal O_{\mathrm i}$ give
them identical payoffs there.

\section{Numerical Results}
\label{sec:numerical}

We consider a binary source with $\lambda=2$, $\mu=3$, and $q=0.5$,
which satisfy $q\pi_0>(1-q)\pi_1$. These parameters give
$c_{\min}=1$, default receiver utility $q\pi_0=0.3$, and
$R_1^{\mathrm{crit}}=\sqrt{10}-2\approx1.1623$. All equilibrium
curves follow from Theorem~\ref{thm:equilibrium}, and the receiver's
utilities from Theorem~\ref{thm:receiver_benefit}.

We first fix $R_1=2$ and vary $R_2$, so that $\phi(2)=12$, $\psi=6$,
and $R_2^{\mathrm{crit}}=2$. As Fig.~\ref{fig:numres1} shows, the
receiver obtains $q\pi_0=0.3$ for $R_2<2$, where
$\mathcal O_{\mathrm s}$ is the unique equilibrium outcome. At
$R_2=2$, both $\mathcal O_{\mathrm s}$ and $\mathcal O_{\mathrm i}$
are equilibrium outcomes, giving the receiver $0.3$ and
$0.3286$, respectively. For $R_2>2$, the receiver's
utility under $\mathcal O_{\mathrm i}$ increases with $R_2$ and
approaches $0.38$ as $R_2$ grows, illustrating the strict gain in
Theorem~\ref{thm:receiver_benefit}.

Fig.~\ref{fig:numres2} shows the corresponding sender payoffs. For
$R_2\leq2$, they equal $J_1^*=V(0)=0.4$ and $J_2^*=0.6$. For $R_2>2$,
the follower's payoff $V(R_2)$ decreases and approaches $0.16$ as
$R_2$ grows, while the leader's payoff increases toward $0.84$. The
sender payoffs are thus continuous at $R_2=2$, whereas the receiver's
utility is not.

\begin{figure}[t]
    \centering
    \includegraphics[width=0.85\columnwidth]{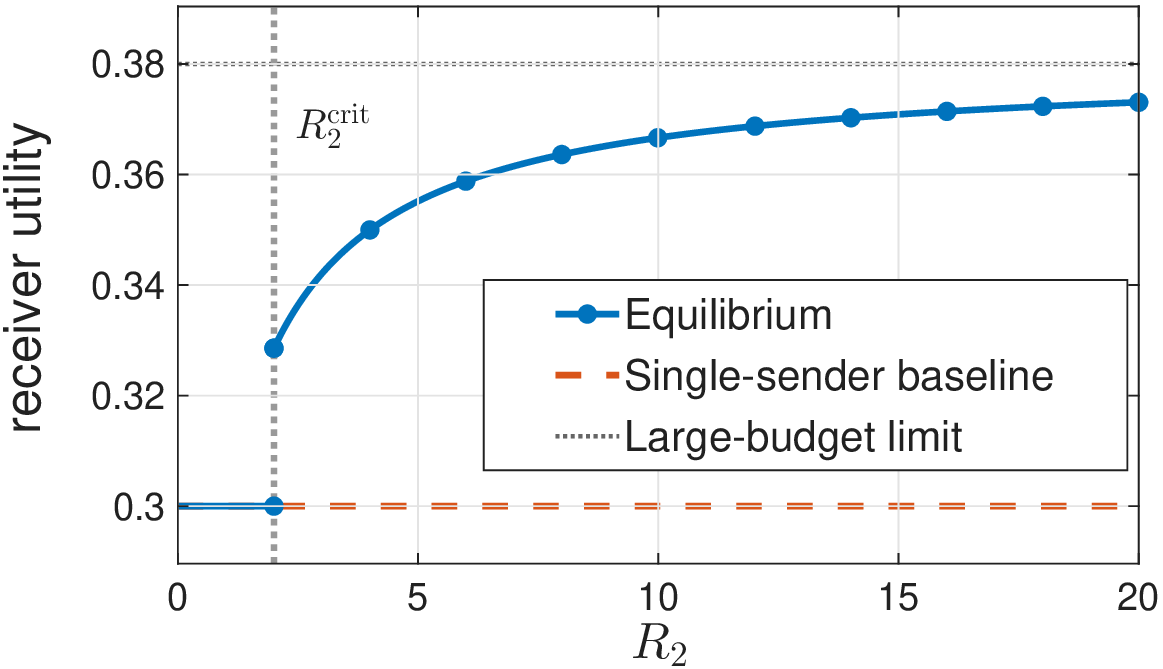}
    \caption{Equilibrium receiver utility $J_R^*$ versus $R_2$ for
$R_1\!=\!2$, where $R_2^{\mathrm{crit}}\!=\!2$. The two marked points at
$R_2=2$ are the utilities under $\mathcal O_{\mathrm s}$ and
$\mathcal O_{\mathrm i}$. The horizontal lines show the single-sender
benchmark $q\pi_0\!=\!0.3$ and the limit $0.38$ as $R_2$ grows.}
    \label{fig:numres1}
    \vspace{-0.4cm}
\end{figure}

\begin{figure}[t]
    \centering
    \includegraphics[width=0.85\columnwidth]{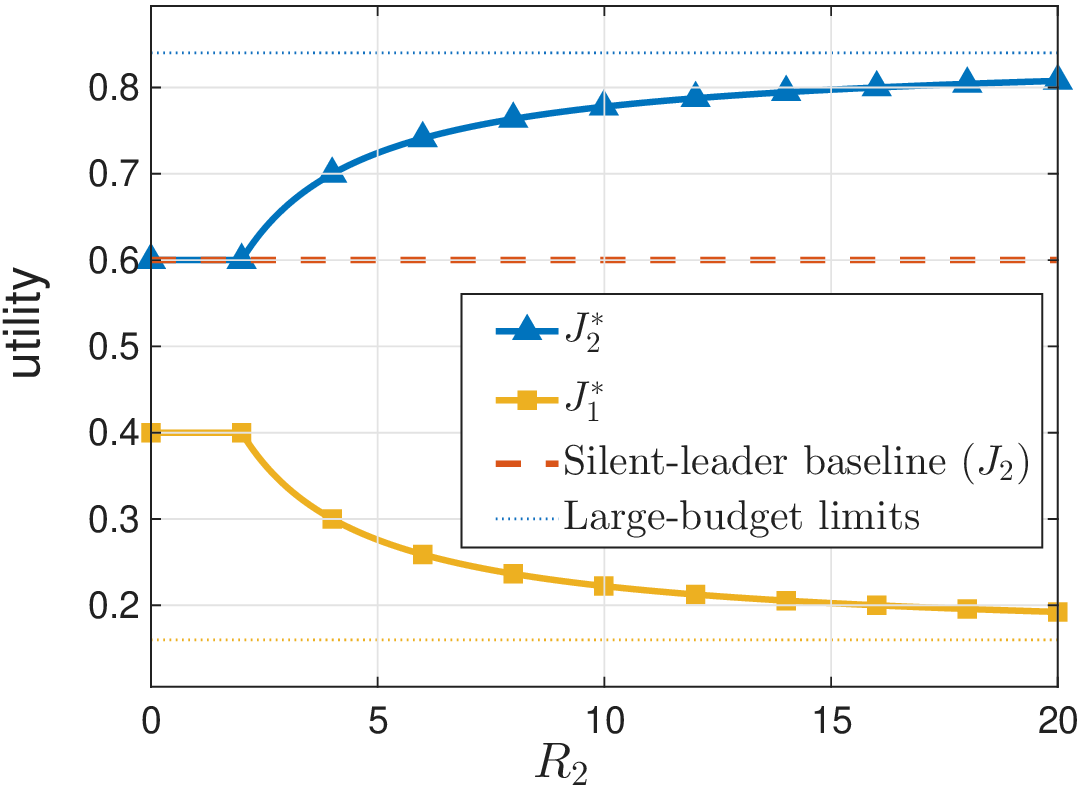}
    \caption{Equilibrium sender payoffs $J_1^*$ and $J_2^*$ versus $R_2$
for $R_1=2$, where $R_2^{\mathrm{crit}}=2$.}
    \label{fig:numres2}
    \vspace{-0.6cm}
\end{figure}

Finally, Fig.~\ref{fig:numres3} fixes $R_1+R_2=8$ and varies the
budget division. We compare the equilibrium with a receiver-optimal
benchmark that chooses the aggregate intensities directly, under the
same budget and the same choice between ZOH and the default
estimator:
\begin{equation}
    J_R^{\mathrm{opt}}(8)
    =\max_{\substack{s,c\geq0\\s+c\leq8}}
    \max\{q\pi_0,J_R(s,c)\}.
    \label{eq:receiver_optimal_benchmark}
\end{equation}
By~\eqref{eq:receiver_gain_identity}, $J_R(s,c)$ is strictly
increasing in $c$ whenever $s>0$, since its $c$-derivative has the
sign of $\lambda s+c_{\min}(\mu+s)>0$. Hence, any maximizer above the
default utility satisfies $s+c=8$. Substituting $c=8-s$ gives
\[
    J_R(s,8-s)
    =0.3+\frac{s(7-s)}{5[24+s(7-s)]},
\]
which is increasing in $s(7-s)$ and hence maximized at $s=3.5$. The
benchmark is therefore attained at $(s,c)=(3.5,4.5)$, with
$J_R^{\mathrm{opt}}(8)\approx0.3676$.

At $(R_1,R_2)=(3.5,4.5)$, we have $R_2>R_2^{\mathrm{crit}}\approx1.30$,
so $\mathcal O_{\mathrm i}$ is the unique equilibrium outcome:
sender~1 reports only state~$1$, sender~2 reports only state~$0$, and
the aggregate intensities $(s,c)=(3.5,4.5)$ coincide with the
benchmark maximizer. More generally, since $\mathcal O_{\mathrm i}$ implements
$(s,c)=(R_1,R_2)$, any benchmark maximizer $(s,c)$ satisfying
$s>R_1^{\mathrm{crit}}$ and
$c>R_2^{\mathrm{crit}}(s)$ is uniquely implemented by the
corresponding budget division $(R_1,R_2)=(s,c)$, where
$R_2^{\mathrm{crit}}(s)$ denotes~\eqref{eq:R2_crit} evaluated at
$R_1=s$. At $c=R_2^{\mathrm{crit}}(s)$, the same reporting rates are
attained under the intervention equilibrium, but the equilibrium
outcome is not unique. Moving away from $(3.5,4.5)$ reduces the receiver's
utility, and if either sender receives too little budget, the receiver obtains only its default utility $0.3$, as under a single sender with the entire
budget. Thus, in this example, an appropriate budget division allows
strategic reporting to attain receiver-optimal performance within the
specified estimator architecture.

\begin{figure}[t]
    \centering
    \includegraphics[width=0.95\columnwidth]{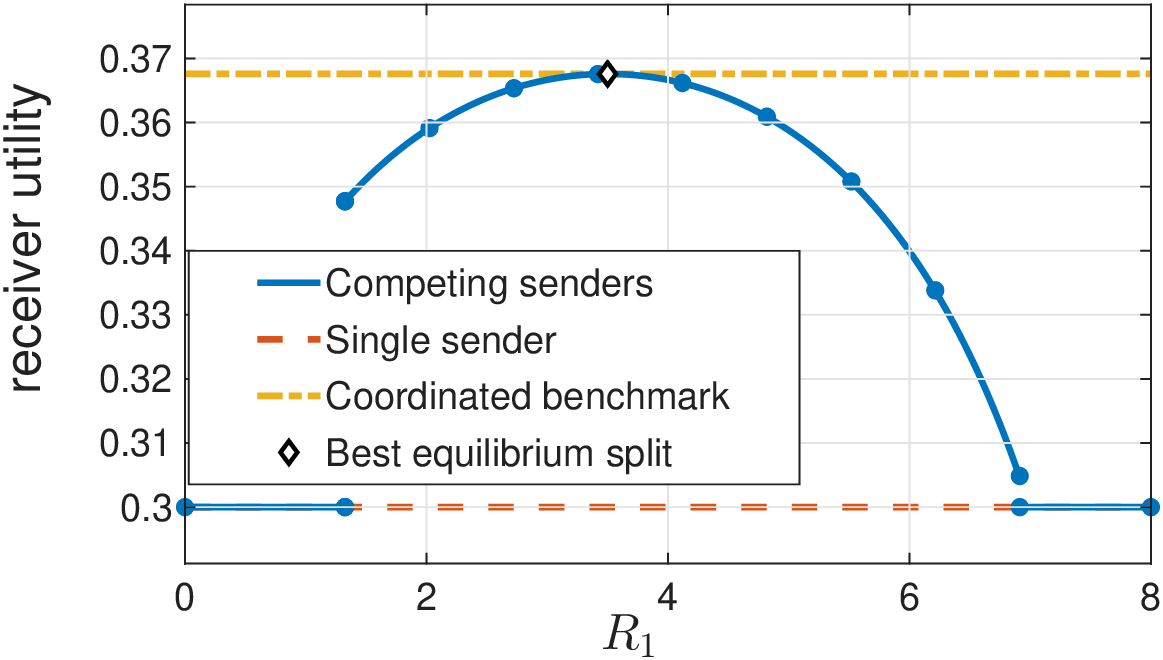}
    \caption{Equilibrium receiver utility $J_R^*$ versus the budget
division $(R_1,8-R_1)$, compared with the receiver-optimal benchmark
$J_R^{\mathrm{opt}}(8)\approx0.3676$.}
    \label{fig:numres3}
    \vspace{-0.5cm}
\end{figure}

\section{Conclusion}
We studied competition between two senders that influence a receiver through the timing of truthful reports about a binary continuous-time Markov source. We characterized the Stackelberg equilibria in closed form. When the follower can obtain positive utility against a silent leader, the leader's state-$0$ reports initially help the follower by reducing the reporting effort needed for participation, so every optimal leader policy is either silence or full-budget state-$0$ reporting, with explicit budget thresholds determining which. In the low-budget regime, equilibrium leader policies can be nonunique, but all such equilibria leave the receiver at its default utility. When the full-budget intervention outcome is optimal, it makes the participation constraint slack and strictly improves the receiver's utility relative to a single sender with the same total budget. In our numerical example, an appropriate budget division attains a receiver-optimal reporting benchmark. Future work will consider opposing senders that allocate their budgets across multiple heterogeneous sources, alternative commitment orders, and receivers that also exploit information conveyed by the absence of reports.

\bibliographystyle{ieeetr}
\bibliography{references2}
\end{document}